\title{Completing the Picture for the Skolem Problem on Order-4 Linear Recurrence Sequences}
\newcommand{\Q}{\mathbb{Q}}
\newcommand{\p}{\mathfrak{p}}
\newcommand{\Z}{\mathbb{Z}}
\newcommand{\N}{\mathbb{N}}
\renewcommand{\O}{\mathcal{O}}
\newcommand{\LRS}[1]{{\bm{#1}}}
\begin{document}
\maketitle
\begin{abstract}
For almost a century, the decidability of the Skolem Problem, that is, the problem of determining whether a given linear recurrence sequence (LRS) has a zero term, has remained open. A breakthrough in the 1980s established that the Skolem Problem is decidable for algebraic LRS of order at most 3, and real algebraic LRS of order at most 4. However, for general algebraic LRS of order 4 decidability has remained open. Our main contribution in this paper is to prove decidability for this last case, i.e.\ we show that the Skolem Problem is decidable for all algebraic LRS of order at most 4. 
\end{abstract}
\section{Introduction}
Given a ring $R$, a \emph{linear recurrence sequence} over $R$ ($R$-LRS for short) is a sequence $\LRS{u} = \langle u_n \rangle_{n=0}^\infty$ of elements of $R$ satisfying a linear recurrence relation of the form
\begin{align} \label{LRS_def}
	u_{n+d} = a_{d-1} u_{n+d-1} + \dots + a_0 u_n
\end{align}
where $a_0, \dots , a_{d-1} \in R$ and $a_0 \neq 0$. If $d$ is minimal such that a relation of the form \eqref{LRS_def} holds, we call $d$ the \emph{order} of $\LRS{u}$. The celebrated Skolem-Mahler-Lech theorem \cite{Skolem_SML,Mahler_SML,lech_note_1953} states that the zero set $\{n \in \Z_{\geq 0} : u_n = 0\}$ of an $R$-LRS is the union of finitely many arithmetic progressions and a finite set, when $R$ is an integral domain of characteristic zero. 

Unfortunately, all known proofs of the Skolem-Mahler-Lech theorem are ineffective - there is no known method to compute the zeroes of a given algebraic LRS (that is, a $\overline \Q$-LRS, where $\overline \Q$ is the field of algebraic numbers). Equivalently there is currently no known method to decide the problem of whether a given arbitrary $\overline \Q$-LRS has a zero. This problem is known as the Skolem Problem. 

So far, only special cases are known, and the general problem has remained open for around 90 years. The Skolem Problem is closely connected with many topics in theoretical computer science and other areas, including loop termination \cite{ouaknine_linear_2015,almagor_deciding_2021}, formal power series (e.g. \cite[Section 6.4]{berstel_noncommutative_2010}), matrix semigroups \cite{bell_mortality_2021}, stochastic systems \cite{agrawal_approximate_2015,barthe_universal_2020} and control theory \cite{blondel_survey_2000}. 
	
Substantial work has been done on the decidability of the Skolem problem. The breakthrough papers \cite{Vereshchagin,Mignotte_distance} established decidability for algebraic LRS of order at most 3 and real algebraic LRS of order at most 4. To date, this is the state of the art on the Skolem Problem; even in the simplest case of integer sequences, decidability is not yet known for order-5 LRS. 

More recently, decidability has been shown to hold subject to some additional assumptions. Lipton \emph{et al.}\ \cite{Lipton_2022} establish decidability for \emph{reversible} integer LRS (i.e, those sequences that may be continued backwards in $\Z$) of order at most 7. Further, Bilu \emph{et al.}\ \cite{Yuri_2023} show that the Skolem problem is decidable for \emph{simple} $\Q$-LRS\footnote{An LRS is simple if all the roots of its characteristic polynomial are simple.}, assuming the $p$-adic Schanuel conjecture and the Skolem conjecture hold. 
	
Our contribution in this paper is to extend the results of \cite{Vereshchagin,Mignotte_distance} by removing the assumption that the LRS is real; we show decidability for \emph{all} algebraic order 4 LRS. It was stated in \cite[p. 33]{Chonev} that the general algebraic case of order 4 is difficult. However, we in fact show that this case may be solved by a relatively simple recombination of ideas already present in \cite{Vereshchagin,Mignotte_distance}.

 \section{Preliminaries}\label{Prelim}
 Here we briefly summarise some basic notions about algebraic numbers, heights, and LRS. More details may be found in \cite{neukirch_algebraic_1999,Waldschmidt_book}.
\subsection{Algebraic numbers and heights}
 Let $K$ be a finite extension of $\Q$, i.e.\ a number field. Let $\O_K$ denote the ring of algebraic integers in $K$. Define a \emph{fractional ideal} of $K$ to be a non-zero finitely generated $\O_K$-submodule of $K$. Equivalently $I$ is a fractional ideal if and only if there is $c \in K$ such that $cI \subseteq \O_K$ is an ideal of $\O_K$. The fractional ideals of $K$ form a group under multiplication. We have a unique factorisation theorem for fractional ideals \cite[p. 22]{neukirch_algebraic_1999}.
 \begin{theorem}[Unique Factorisation Theorem] \label{thm:ideal_fac}
Any fractional ideal $I$ of $K$ has a unique decomposition
\begin{align*}
I = \prod_{i=1}^t \p_i^{n_i}
\end{align*}
where $t \in \N$, each $\p_i \subseteq \O_K$ is a prime ideal and $n_i \in \Z$. 
 \end{theorem}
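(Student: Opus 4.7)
The plan is to first establish the factorisation for non-zero integral ideals of $\O_K$ (with non-negative exponents) and then bootstrap to fractional ideals via the characterisation $cI \subseteq \O_K$ already recorded in the excerpt.

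I would rely on three structural properties of $\O_K$: it is Noetherian (since it is a finitely generated $\Z$-module, hence Noetherian as a $\Z$-module, hence as a ring); every non-zero prime ideal is maximal (since $\O_K$ has Krull dimension $1$: any non-zero prime contracts to a non-zero prime of $\Z$, which is maximal, so the extension $\O_K/\p$ is a finite integral domain and thus a field); and $\O_K$ is integrally closed in $K$ by the standard definition. These assemble $\O_K$ into a Dedekind domain. A preliminary lemma I would need is that for any non-zero ideal $J \subseteq \O_K$ there exist non-zero prime ideals $\p_1, \dots, \p_r$ with $\p_1 \cdots \p_r \subseteq J$; this follows by Noetherian induction on the poset of ideals for which this fails.

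For existence of the factorisation, I argue by Noetherian induction. Suppose some non-zero ideal of $\O_K$ is not a product of prime ideals, and let $I$ be maximal among such. Then $I \neq \O_K$ and $I$ is not prime, so $I \subseteq \p$ for some maximal ideal $\p$. Define
\begin{align*}
\p^{-1} := \{x \in K : x \p \subseteq \O_K\},
\end{align*}
which is a fractional ideal containing $\O_K$. The crucial step is to show $\p^{-1} \supsetneq \O_K$: pick any $0 \neq a \in \p$, use the preliminary lemma to choose $r$ minimal with $\p_1 \cdots \p_r \subseteq (a) \subseteq \p$, deduce some $\p_i = \p$ by primality and maximality, and rearrange to produce an element $b \in K \setminus \O_K$ with $b\p \subseteq \O_K$. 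Once $\p^{-1} \supsetneq \O_K$ is known, $\p^{-1} I$ is an integral ideal (because $\p^{-1} I \subseteq \p^{-1}\p \subseteq \O_K$, using $\p^{-1}\p = \O_K$, which in turn uses that otherwise $\p^{-1}\p = \p$ would force elements of $\p^{-1}$ to be integral over $\O_K$, contradicting integral closedness) strictly containing $I$. By maximality, $\p^{-1} I$ factors into primes, and multiplying by $\p$ gives a prime factorisation of $I$, a contradiction.

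For uniqueness, given $\p_1 \cdots \p_r = \q_1 \cdots \q_s$ (with repetitions allowed), the left-hand side lies in $\p_1$, so by primality some $\q_j \subseteq \p_1$, and by maximality $\q_j = \p_1$; multiplying both sides by $\p_1^{-1}$ and iterating yields equality of multisets of prime factors. Finally, for an arbitrary fractional ideal $I$, choose $c \in K^\times$ with $cI \subseteq \O_K$ and, multiplying by a suitable element, assume $c \in \O_K$; then $I = (c)^{-1}(cI)$, and subtracting exponents in the prime factorisations of the integral ideals $(c)$ and $cI$ yields the desired decomposition with $n_i \in \Z$. Uniqueness for fractional ideals follows by clearing denominators and appealing to the integral case. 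The main obstacle is the lemma $\p^{-1} \supsetneq \O_K$, which is precisely where the integrally-closed hypothesis is used non-trivially; everything else is bookkeeping.
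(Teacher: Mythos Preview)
Your argument is the standard Dedekind-domain proof and is correct; note only the small misattribution at the end --- the strict containment $\p^{-1} \supsetneq \O_K$ uses Noetherianness and one-dimensionality rather than integral closedness, while integral closedness enters in showing $\p\p^{-1} = \O_K$ (and hence $\p^{-1}I \supsetneq I$). The paper itself does not supply a proof of this theorem at all: it is quoted as a background result with a reference to Neukirch, whose treatment is essentially the argument you have outlined, so there is no alternative approach to compare against.
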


 For a prime ideal $\p \subseteq \O_K$ we define the \emph{valuation} $v_\p : K \to \Z \cup \{\infty\}$ by $v_\p(0) = \infty$ and for non-zero $a \in K$, we define $v_\p(a)$ to be the exponent of $\p$ in the prime decomposition of the fractional ideal $a\O_K$ given by Theorem \ref{thm:ideal_fac}. If $p \in \mathbb Z$ is a prime, we may define the $p$-adic absolute value on $\mathbb Q$ by $|x|_p = p^{-v_p(x)}$. By Ostrowski's theorem, any (non-trivial) absolute value on $\Q$ is equivalent to the ``usual'' absolute value $|\cdot|$ or a $p$-adic absolute value $|\cdot|_p$ for some prime $p \in \Z$. Furthermore, if $|\cdot|_v$ is a (non-trivial) absolute value on $K$, then its restriction to $\Q$ is either equivalent to the "usual" absolute value $|\cdot|$ (in which case $|\cdot|_v$ is \emph{Archimedean}) or some $p$-adic absolute value $|\cdot|_p$ (in which case $|\cdot|_v$ is \emph{non-Archimedean} and we write $v \mid p$). We normalise $|\cdot|_v$ such that, when restricting to $\Q$, $|\cdot|_v$ coincides with the usual absolute value $|\cdot|$ or $|\cdot|_p$ for some $p$; in the Archimedean case we have $|x|_v = x$ for all $x \in \Q_{\geq 0}$, and in the non-Archimedean case $|p|_v = p^{-1}$. Denote the set of non-trivial absolute values on $K$, normalised as above, by $M_K$.

If $v \in M_K$ is Archimedean, it either corresponds to a real embedding $K \overset{\sigma}\hookrightarrow \mathbb R$ or a pair of complex embeddings $K \overset{\sigma,\overline \sigma}\hookrightarrow \mathbb C$. In both cases, under the normalisation above we have $|x|_v = |x|_\sigma = |\sigma(x)|$. Define the \emph{local degree} $d_v$ of $v$ as $d_v = 1$ or $d_v = 2$ if $v$ corresponds to a real embedding or a pair of complex conjugate embeddings respectively.

If $v \in M_K$ is non-Archimedean, and $v \mid p$, then $v$ corresponds to a prime ideal $\p \subseteq \O_K$ dividing $p$. Under the normalisation above, we have $|x|_v = |x|_\p = p^{-v_\p(x)/e_\p}$ where $e_{\p}$ is the ramification index; the exponent of $\p$ in the prime ideal decomposition of $p$ in $\O_K$. Define $d_v = [K_v:\Q_p]$ where $K_v,\Q_p$ are the completions of $K, \Q$ with respect to $|\cdot|_v, |\cdot|_p$ respectively.

If $x \in K$ and $[K:\Q] = D$, we define the \emph{absolute logarithmic height} of $x$ (or just \emph{height} for short) as
\begin{align*}
    h(x) := \frac{1}{D} \sum_{v \in M_K} d_v \log \max\{|x|_v,1\}.
\end{align*}

It is known that the height depends only on $x$, not the choice of number field $K$. The height satisfies the following properties, found in \cite[Sections 3.2, 3.5]{Waldschmidt_book}.
\begin{proposition} \label{prop:h_props}
For any algebraic numbers $\alpha_1,\alpha_2$ with $\alpha_1 \neq 0$, for any $n \in \Z$ and any absolute value $|\cdot|_v$ on $\Q(\alpha_1)$ we have
\begin{enumerate}
    \item $h(\alpha_1 \alpha_2) \leq h(\alpha_1) + h(\alpha_2)$, \label{enum:h_props_mult}
    \item $h(\alpha_1 + \alpha_2) \leq \log 2 + h(\alpha_1) + h(\alpha_2)$, \label{enum:h_props_add}
    \item $h(\alpha_1^n) = |n|h(\alpha_1)$. \label{enum:h_props_pow}
    \item $-[\Q(\alpha_1):\Q]h(\alpha_1) \leq \log |\alpha_1|_v \leq [\Q(\alpha_1):\Q]h(\alpha_1)$ \label{enum:h_props_log}
\end{enumerate}
\end{proposition}
\subsection{Linear recurrence sequences} \label{Linear Recurrence Sequences}
 Consider a  $\overline{\mathbb Q}$-LRS $\LRS{u}$ of order $d$ satisfying \eqref{LRS_def}; define its \emph{characteristic polynomial} as $g(X) = X^d - a_{d-1}X^{d-1} - \dots - a_0$, and call the distinct roots $\{\lambda_1, \dots , \lambda_s\}$ of $g$ the \emph{characteristic roots} of $\LRS{u}$. If $\lambda_i/\lambda_j$ is a root of unity for some $i \neq j$, call $\LRS{u}$ degenerate, otherwise say $\LRS{u}$ is non-degenerate. By \cite[Theorem 1.2]{recurrence}, any $\overline \Q$-LRS may be effectively decomposed into a finite number of subsequences which are all either identically zero or non-degenerate. Therefore, the Skolem Problem reduces to the non-degenerate case.
 
 For $K$ a number field containing $\LRS{u}$ and its characteristic roots, given any absolute value $|\cdot|_v$ on $K$, if $|\lambda_i|_v \geq |\lambda_j|_v$ for all $j$ then we say $\lambda_i$ is \emph{dominant with respect to} $|\cdot|_v$. It is well known that $\LRS{u}$ admits an \emph{exponential polynomial} representation
 \begin{align*}
     u_n = \sum_{i=1}^s P_i(n) \lambda_i^n
 \end{align*}
 where each $P_i \in K[X]$ is a polynomial with degree one less than the multiplicity of $\lambda_i$ as a root of $g$.
\section{Linear forms in logarithms and the MSTV class} 
In this section we give a brief overview of the results of \cite{Mignotte_distance,Vereshchagin} and the methods used therein. 
\subsection{Linear forms in logarithms}
The fundamental result used in these papers is Baker's theorem on linear forms in logarithms \cite{Baker_Sharpening_II_1973}, and its analogue for non-Archimedean absolute values. This yields a lower bound on expressions of the form $|\alpha_1^{b_1} \dots \alpha_s^{b_s}-1|_v$ for algebraic numbers $\alpha_1, \dots , \alpha_s$ and integers $b_1, \dots , b_s$ such that $\alpha_1^{b_1} \dots  \alpha_s^{b_s} - 1 \neq 0$. The original inequality, proven for the usual absolute value on $\mathbb C$, has since been improved upon, with various forms existing in the literature. For $|\cdot|_v$ the usual absolute value on $\mathbb C$, we present the following formulation by Matveev \cite[Corollary 2.3]{matveev_explicit_2000}.
\begin{theorem}[Matveev]\label{Matveev} Let $\alpha_1, \dots , \alpha_s$ be non-zero complex algebraic numbers contained in a number field of degree $D$ and $\log \alpha_1, \dots , \log \alpha_s$ some determination of their logarithms. Let $b_1, \dots , b_s \in \Z$ be such that 
\begin{align*}
    \Lambda = b_1 \log \alpha_1, + \dots + b_s \log \alpha_s \neq 0 \, .
\end{align*}
Further, let $A_1, \dots , A_s, B$ be real numbers such that 
\begin{align*}
    &A_j \geq \max\{Dh(\alpha_j), |\log \alpha_j|, 0.16\} \, \, \, (j = 1, \dots , s) \\
    &B = \max\{|b_1|, \dots , |b_s|\} \, .
\end{align*}
Then 
\begin{align*}
    |\Lambda| \geq \exp\left(-2^{6s+20}D^2 A_1 \dots A_s (1+ \log D)(1+ \log B)\right) \, .
\end{align*}
\end{theorem}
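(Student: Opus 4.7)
The plan is to prove this via Baker's method, arguing by contradiction. Suppose $|\Lambda| < \exp(-C)$ for some $C$ strictly smaller than the claimed bound; the strategy is to use this smallness to exhibit a non-zero algebraic number whose modulus is smaller than what its degree and height permit, contradicting the fundamental (Liouville-type) inequality that follows from the product formula. Throughout, the quantitative ingredients $D$, $A_j$, $B$ will play the role of parameters governing degree, heights of the $\alpha_j$, and the size of the integer coefficients $b_j$ respectively.

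Following Matveev's refinement of Baker's method, I would use Laurent's interpolation determinant technique rather than the more classical auxiliary function construction. Concretely, I would build a large square matrix $M$ whose rows are indexed by multi-indices $(\tau_0,\tau_1,\dots,\tau_s)$ and columns by $(k_0,k_1,\dots,k_s)$, with each index ranging over a rectangular box of sizes $T_0,\dots,T_s$ and $L_0,\dots,L_s$ to be optimised later. A typical entry has the shape
\[ \binom{\tau_0}{k_0}\,\tau_0^{k_1+\dots+k_s}\prod_{j=1}^{s}\alpha_j^{\tau_j\tau_0}(\log\alpha_j)^{k_j}, \]
up to factorials and Feldman-type binomial adjustments designed to exploit $p$-adic divisibility; the switch from raw monomials in $\log\alpha_j$ to Feldman polynomials is what sharpens the numerical constants.

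The heart of the argument is then to squeeze two competing estimates for $\det M$. For the upper bound, I would use the assumed smallness of $|\Lambda|$ to substitute $\log\alpha_1 = -\sum_{i\geq 2}(b_i/b_1)\log\alpha_i + \Lambda/b_1$ throughout the matrix and expand; algebraic cancellation reduces the determinant to a sum of terms each proportional to a power of $\Lambda$, so after a Hadamard-type bound one obtains $|\det M|\leq (\text{mild})\cdot|\Lambda|$, which is extremely small by hypothesis. For the lower bound, $\det M$ lies in the number field generated by the $\alpha_j$, and a careful height estimate together with the product formula gives $|\det M|\geq \exp(-C')$ whenever $\det M\neq 0$, where $C'$ is governed by $D$, the $A_j$, and the matrix dimensions. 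Comparing and optimising $T_j,L_j$ against the data $D,A_j,B$ produces the desired contradiction, and the explicit constant $2^{6s+20}$ in the statement emerges from the combinatorics of this optimisation.

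The main obstacle, and the step that distinguishes Matveev's contribution from the earlier bounds of Baker, Waldschmidt, and Baker--W\"ustholz, is proving that $\det M$ is non-zero. This requires a Kummer-theoretic descent: one must rule out hidden multiplicative relations among the $\alpha_j$ that would collapse the column rank of $M$. Matveev handles this by an inductive reduction on $s$, working over suitable cyclotomic extensions to reduce to the case where no such relations exist. It is precisely this algebraic--combinatorial reduction that keeps the dependence on $s$ at the mild exponential level $2^{6s+20}$ rather than something super-exponential in $s$. I expect the hardest part in practice is not any single conceptual step but the bookkeeping: pushing the parameter optimisation through while tracking heights, local degrees, and factorial denominators with enough precision to extract exactly the explicit constant stated.
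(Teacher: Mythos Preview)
The paper does not prove this theorem at all: it is quoted as an external result, attributed to Matveev \cite[Corollary 2.3]{matveev_explicit_2000}, and merely stated for later use. There is therefore no ``paper's own proof'' to compare your proposal against.

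That said, your sketch is a reasonable high-level outline of how Matveev's argument actually proceeds (interpolation determinants \`a la Laurent, a zero-versus-nonzero dichotomy for the determinant, Kummer-type descent to handle multiplicative relations, and parameter optimisation). But it is only a sketch: the substitution step you describe does not literally make every term proportional to a power of $\Lambda$, and the genuine work --- the non-vanishing of the determinant and the precise tracking that yields the constant $2^{6s+20}$ --- is asserted rather than carried out. For the purposes of this paper none of that matters, since the theorem is used as a black box; if you intended to supply a self-contained proof here, you would need to either reproduce Matveev's full argument (dozens of pages) or, as the paper does, simply cite it.
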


For $|\cdot|_v$ non-Archimedean, we have the following result of Yu \cite[Theorem 1]{yu_p-adic_1999}.\footnote{The original papers \cite{Vereshchagin} and \cite{Mignotte_distance} relied on results of a paper \cite{Van_der_Poorten_1977} by Van der Poorten for the non-Archimedean case. However, this article is known to have substantial flaws, subsequently discovered and corrected by Yu.}
\begin{theorem}[Yu]\label{Yu}
Let $\alpha_1, \dots , \alpha_s$ be non-zero algebraic numbers contained in a number field $K$ of degree $D$. Let $\p \subseteq \O_K$ be a prime ideal lying above integer prime $p$, such that $v_\p(\alpha_i) = 0$ for $i = 1, \dots, s$. Let $b_1,\dots,b_s \in \Z$ be such that
\begin{align*}
    \Xi = \alpha_1^{b_1} \dots \alpha_s^{b_s}-1 \neq 0 \, .
\end{align*}
Further, let $A_1, \dots , A_s, B$ be real numbers such that
\begin{align*}
    &A_j \geq \max\{h(\alpha_j), \log p\} \, \, \, (j=1 , \dots , s) \\
    &B \geq \max \{|b_1| , \dots , |b_s|, 3\}.
\end{align*}
Then
\begin{align*}
    v_\p(\Xi) \leq CA_1 \dots A_s \log B
\end{align*}
where $C>0$ is some constant depending effectively on $s, D$ and  $\p$.
\end{theorem}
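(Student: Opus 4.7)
The plan is to use the classical Gelfond--Baker transcendence method in its $p$-adic incarnation. Assuming for contradiction that $v_\p(\Xi) > C A_1 \cdots A_s \log B$ for a suitable constant $C$, I would first reduce to estimating a $p$-adic linear form. By replacing each $\alpha_j$ with a bounded power and working in an appropriate finite extension, one may arrange that every $\alpha_j$ lies in the domain of convergence of the $p$-adic logarithm; then $v_\p(\Xi)$ and $v_\p(\Lambda)$ differ by $O(1)$, where $\Lambda := b_1 \log_p \alpha_1 + \cdots + b_s \log_p \alpha_s$. The aim thereby becomes a lower bound for $|\Lambda|_\p$.

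The second step is to construct, via Siegel's lemma, an auxiliary polynomial $P \in \O_K[X_1,\ldots,X_s]$ with partial degrees bounded by parameters $L_j$ and with coefficients of controlled height, such that the $p$-adic analytic function
\[
F(z) := P\bigl(\exp_p(z \log_p \alpha_1), \ldots, \exp_p(z \log_p \alpha_s)\bigr)
\]
vanishes to a chosen high multiplicity $T$ at each of $z = 0, 1, \ldots, N-1$. Siegel's lemma delivers such a $P$ provided $\prod_j (L_j+1)$ suitably exceeds $NTD$, with coefficient sizes controlled in terms of the $A_j$ and $B$.

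The heart of the argument is then a $p$-adic extrapolation: the hypothesis $v_\p(\Lambda) \gg 0$ means that $\exp_p(n\log_p \alpha_j)$ is $\p$-adically close to $\alpha_j^n$, so the prescribed vanishing of $F$ forces $|F(n)|_\p$ to be very small on $\{0,1,\ldots,N-1\}$. A $p$-adic Schwarz lemma, applied to $F$ on a disc of controlled radius, propagates this smallness to a much larger range of integers. Iterating this extrapolation several times doubles the effective range while slowly eroding the multiplicity, until one has $|F(n)|_\p$ extremely small at so many integers that a Liouville-type inequality forces $F(n) = 0$ outright for all these $n$.

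The hard part will be the \emph{zero estimate}: one must show that a nonzero $P$ of partial degrees $L_j$ cannot have the corresponding $F$ vanish to multiplicity $T$ at too many integers, with a bound polynomial in the $L_j$ and $T$. This is the point at which the non-degeneracy hypothesis $\Xi \neq 0$ is genuinely used, via a multiplicity estimate of Philippon or Masser--W\"ustholz type applied to the algebraic torus $\mathbb{G}_m^s$. Closing the parameter loop by optimising $L_j, N, T$ against the constraints from Siegel's lemma, the extrapolation, and the zero estimate yields the constant $C$ depending effectively on $s, D, \p$. It is precisely this zero-estimate accounting, done carefully in Yu's paper, that rectifies the flaw in the earlier argument of van der Poorten mentioned in the footnote.
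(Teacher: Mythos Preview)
The paper does not prove this theorem at all: it is quoted as \cite[Theorem~1]{yu_p-adic_1999} and used as a black box, so there is no ``paper's own proof'' to compare against. Your write-up is therefore not a competing proof but an attempt to supply what the paper deliberately omits.

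As a high-level sketch of the $p$-adic Gelfond--Baker method your outline is broadly accurate: reduction from $\Xi$ to a $p$-adic linear form $\Lambda$ in logarithms, construction of an auxiliary function via Siegel's lemma, $p$-adic extrapolation through a Schwarz-type lemma, and a zero/multiplicity estimate on $\mathbb{G}_m^s$ to close the argument. You are also right that the zero estimate is the delicate point and the place where van der Poorten's earlier argument went wrong. That said, what you have written is only an architectural sketch, not a proof: none of the parameter choices are specified, the iteration of the extrapolation step is asserted rather than carried out, and the interplay between the Liouville inequality and the zero estimate that actually forces the contradiction is not quantified. Turning this into an honest proof is essentially the content of Yu's paper and would run to dozens of pages; presenting it as a proof of the stated theorem overstates what has been shown. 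For the purposes of this paper, the appropriate thing is simply to cite Yu, as the author does.
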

\subsection{The MSTV class}
The crux of the approach in \cite{Mignotte_distance,Vereshchagin} is to use Baker's theory of linear forms in logarithms to prove lower bounds on exponential sums. For an excellent expository note which elaborates on all the details of the results of \cite{Mignotte_distance,Vereshchagin}, see \cite{bilu_skolem_2025}. Here, we intend only to give a brief overview. In particular, the following results are key.
\begin{theorem} \label{thm:2_sum}
Let $b_1,b_2$ and $\alpha_1,\alpha_2$ be algebraic numbers in a number field $K$ of degree $D$ such that $\alpha_1/\alpha_2$ is not a root of unity. Let $h = \max\{h(b_1), h(b_2)\}$, and let $|\cdot|_v$ be any non-trivial absolute value (let the underlying prime be $p$ if this absolute value is non-Archimedean). Then there are constants $C_1,C_2 > 0$ depending effectively on $p,D, \alpha_1, \alpha_2$ such that for all $n > C_1 (h+1)$ we have
\begin{align} \label{eqn:2_sum}
    |b_1 \alpha_1^n + b_2 \alpha_2^n|_v \geq |\alpha_1|_v^n e^{-C_2 (h+1) \log n}
\end{align}
\end{theorem}
\begin{theorem} \label{thm:3_sum}
Let $b_1,b_2,b_3$ and $\alpha_1,\alpha_2,\alpha_3$ be algebraic numbers in a number field $K$ of degree $D$ such that no quotient $\alpha_i/\alpha_j$ is a root of unity for $1 \leq i < j \leq 3$. Let $h = \max\{h(b_1), h(b_2),h(b_3)\}$, and let $|\cdot|_v$ be an Archimedean absolute value. Suppose also that $|\alpha_1|_v = |\alpha_2|_v = |\alpha_3|_v$. Then there are constants $C_1,C_2 > 0$ depending effectively on $p,D, \alpha_1, \alpha_2$ such that for all $n > C_1 (h+1)$ we have
\begin{align}
    |b_1 \alpha_1^n + b_2 \alpha_2^n + b_3 \alpha_3^n|_v \geq |\alpha_1|_v^n e^{-C_2 (h+1) \log n}
\end{align}
\end{theorem}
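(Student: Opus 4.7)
The plan is to argue by contradiction: suppose $|u_n|_v < |\alpha_1|_v^n e^{-C_2(h+1)\log n}$ for some $n > C_1(h+1)$, where $C_1, C_2$ are large constants to be chosen. A contradiction will be derived by combining Theorem \ref{thm:2_sum} with Matveev's theorem (Theorem \ref{Matveev}). First, for any permutation $\{i,j,k\}$ of $\{1,2,3\}$, writing $b_i \alpha_i^n = u_n - (b_j \alpha_j^n + b_k \alpha_k^n)$ and applying the two-term lower bound of Theorem \ref{thm:2_sum} to the right-hand side, the assumed smallness of $|u_n|_v$ forces $|b_i|_v \geq \tfrac{1}{2} e^{-C(h+1)\log n}$ for each $i \in \{1,2,3\}$.

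Next, I would exploit the algebraic identity
\begin{equation*}
(b_1 \alpha_1^n + b_2 \alpha_2^n)(b_1 \alpha_1^n + b_3 \alpha_3^n) - b_2 b_3 (\alpha_2 \alpha_3)^n = b_1 \alpha_1^n \, u_n.
\end{equation*}
Dividing by $b_2 b_3 (\alpha_2 \alpha_3)^n$ and setting $A = (b_1/b_2)(\alpha_1/\alpha_2)^n$, $B = (b_1/b_3)(\alpha_1/\alpha_3)^n$, this becomes $(1+A)(1+B) = 1 + \delta$, where, using $|\alpha_i|_v = |\alpha_1|_v$ and the lower bounds on $|b_i|_v$ from the previous step, $|\delta|_v$ is extremely small. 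Moreover, Theorem \ref{thm:2_sum} applied to $b_1 \alpha_1^n + b_j \alpha_j^n$ for $j = 2, 3$ gives $|1+A|_v, |1+B|_v \geq e^{-C'(h+1)\log n}$, so both factors are bounded away from $0$.

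Taking principal logarithms yields $\log(1+A) + \log(1+B) \equiv 0 \pmod{2\pi i \mathbb{Z}}$ up to an additive error of size $O(|\delta|_v)$. Unfolding $\log(1+A) = \log(b_1 \alpha_1^n + b_2 \alpha_2^n) - \log b_2 - n \log \alpha_2$ and symmetrically for $\log(1+B)$, the aim is to rearrange into a linear form in logarithms of bounded-height algebraic numbers (namely the quotients $b_i/b_j$, $\alpha_i/\alpha_j$, and $-1$) with integer coefficients of size $O(n)$, to which Matveev's theorem supplies a lower bound of the shape $e^{-C''(h+1)\log n}$. Comparing this with the upper bound on the same quantity coming from the smallness of $|\delta|_v$ yields the sought contradiction provided $C_2$ is chosen sufficiently larger than $C''$.

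The principal technical obstacle lies in this final step: a naive unfolding introduces logarithms of the ``large-height'' sum terms $b_1 \alpha_1^n + b_j \alpha_j^n$, which would spoil Matveev's bound (the heights grow linearly in $n$). The crux of the argument, following \cite{Mignotte_distance,Vereshchagin}, is a careful case analysis on the relative sizes of the $|b_i|_v$, combined with a cancellation that removes these high-height contributions and leaves only bounded-height algebraic numbers inside the logarithms. The hypothesis $|\alpha_1|_v = |\alpha_2|_v = |\alpha_3|_v$ and the Archimedean nature of $v$ are essential for making these cancellations compatible; a complete exposition is given in \cite{bilu_skolem_2025}.
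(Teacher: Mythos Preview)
The paper does not supply its own proof of this theorem: it simply records that the Archimedean three-term bound is Theorem~5 of \cite{Mignotte_distance} and points to \cite{bilu_skolem_2025} for details. So strictly speaking there is nothing in the paper to compare your argument against beyond that citation.

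That said, your sketch has a genuine gap at the step you yourself flag. From the identity $(1+A)(1+B)=1+\delta$ you correctly obtain $\log(1+A)+\log(1+B)\equiv O(\delta)\pmod{2\pi i\Z}$, but the quantities $1+A=(b_1\alpha_1^{n}+b_2\alpha_2^{n})/(b_2\alpha_2^{n})$ and $1+B$ have height growing linearly in $n$, and \emph{adding} their logarithms does not cancel the high-height parts: after your unfolding one is left with $\log(b_1\alpha_1^{n}+b_2\alpha_2^{n})+\log(b_1\alpha_1^{n}+b_3\alpha_3^{n})$ plus bounded-height terms, and this sum is not a linear form to which Theorem~\ref{Matveev} applies. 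The phrase ``a careful case analysis \dots\ combined with a cancellation that removes these high-height contributions'' names the missing idea rather than supplying it; nothing in your identity explains how such a cancellation could occur.

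The device actually used in \cite{Mignotte_distance} is different, and it is where the equal-modulus and Archimedean hypotheses genuinely enter: since $|\alpha_i|=|\alpha_1|$ one has $\overline{\alpha_i}=|\alpha_1|^{2}/\alpha_i$, so conjugating the relation $u_n\approx 0$ produces a second near-vanishing relation $\overline{b_1}(\alpha_2\alpha_3)^{n}+\overline{b_2}(\alpha_1\alpha_3)^{n}+\overline{b_3}(\alpha_1\alpha_2)^{n}\approx 0$. Eliminating $\alpha_3^{n}$ between the two relations collapses everything to
\[
\bigl(|b_1|^{2}+|b_2|^{2}-|b_3|^{2}\bigr)+\overline{b_1}\,b_2\,(\alpha_2/\alpha_1)^{n}+b_1\,\overline{b_2}\,(\alpha_1/\alpha_2)^{n}\approx 0,
\]
in which only the bounded-height ratio $\alpha_1/\alpha_2$ and coefficients of height $O(h)$ appear. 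This reduction to a two-term problem (hence to Baker) is precisely what your factorisation identity does not achieve.
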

In the case of Archimedean absolute value, Theorem \ref{thm:2_sum} is \cite[Corollary 2]{Mignotte_distance}, and Theorem \ref{thm:3_sum} follows from combining \cite[Theorems 4, 5]{Mignotte_distance}. The non-Archimedean case of Theorem \ref{thm:2_sum} follows from the $p$-adic analogue of Baker's theory; for completeness we detail a proof as the exact statement is not found in \cite{Mignotte_distance}.
\begin{proof}[Proof of Theorem \ref{thm:2_sum} in the non-Archimedean case]
Let $|\cdot|_v = |\cdot|_\p$ for prime ideal $\p \subseteq \O_K$. First, note that $b_1 \alpha_1^n + b_2 \alpha_2^n = 0$ implies that
\begin{align*}
\left( \frac{\alpha_2}{\alpha_1} \right)^n = -\frac{b_1}{b_2} \, .
\end{align*}
Let $\alpha = \alpha_2/\alpha_1$ and $b = -b_1/b_2$. Then by Proposition \ref{prop:h_props} item \ref{enum:h_props_pow},
\begin{align} \label{eqn:h_quot}
n = \frac{h( b )}{h ( \alpha )} \, .
\end{align}
Now, a result of Voutier \cite[Corollary 2]{PaulVoutier1996} shows that for any algebraic number $\beta$ of degree $D'$ that is not a root of unity, we have
\begin{align} \label{eqn:Voutier}
h(\beta) \geq \frac{2}{D'(\log(3D'))^3} \, .
\end{align}
Applying \eqref{eqn:Voutier} with $\beta = \alpha$, and using items \ref{enum:h_props_mult}, \ref{enum:h_props_pow} from Proposition \ref{prop:h_props} on $h(b) = h(-b_1/b_2)$, equation \eqref{eqn:h_quot} implies that
\begin{align*}
    n \leq D(\log(3D))^3h \, .
\end{align*}
Therefore, for $n > D(\log(3D))^3h$ we have
\begin{align*}
\Xi =  b^{-1}\alpha^n - 1 \neq 0 \, .
\end{align*}
Note that if $\left| b^{-1}\alpha^n \right|_\p \neq 1$, then $|\Xi|_\p = 1$ and \eqref{eqn:2_sum} follows easily. So we only need to consider when $\left|b^{-1}\alpha^n \right|_\p = 1$. Suppose first that $\left|\alpha \right|_\p \neq 1$. In this case, since $\log |\alpha|_\p \neq 0$, we have 
\begin{align*}
    n = \frac{\log \left| b \right|_\p}{\log |\alpha|_\p} \, .
\end{align*}
Since $|\alpha|_\p \neq 1$, either $|\alpha|_\p \geq p^{1/e_{\p}}$ or $|\alpha|_\p \leq p^{-1/e_{\p}}$, meaning $|\log |\alpha|_\p| \geq \frac{1}{e_{\p}} \log p$. Thus, with items \ref{enum:h_props_mult}, \ref{enum:h_props_pow}, and \ref{enum:h_props_log} of Proposition \ref{prop:h_props} applied to $\log|b|_\p$ and $h(b)$, we get
\begin{align*}
    n \leq \frac{e_{\p}}{\log p}D h(b) \leq \frac{2e_{\p}}{\log p} D h 
\end{align*}
and so $|\Xi|_\p = 1$ for $n > \frac{2e_{\p}}{\log p} D h $.

Otherwise, $\left|\alpha \right|_\p = 1$ and therefore $|b|_\p = 1$, so we may apply Theorem \ref{Yu} to $\Xi$ to get that for all $n > \max\{D(\log(3D))^3h,3\}$
\begin{align}
    v_\p(\Xi) &\leq C \max\left\{h( b),\log p \right\} \max \left\{ h( \alpha) , \log p \right\} \log n \notag \\
    &\leq C' (h+1) \log n \label{eqn:vp_Xi}
\end{align}
for constants $C,C'$ depending only on $\p,D,\alpha_1,\alpha_2$. Thus, let $C_1 = \max \left\{D(\log(3D))^3,\frac{2e_{\p}}{\log p} D,3 \right\}$, then for all $n > C_1(h+1)$ we have
\begin{align*}
|b_1 \alpha_1^n + b_2 \alpha_2^n|_\p &\geq |b_1 \alpha_1^n|_\p |\Xi|_\p \\
&\geq |\alpha_1|_\p^n e^{-hD} p^{-v_\p(\Xi)/e_{\p}} \\
&\geq |\alpha_1|^n_\p e^{-C_2(h+1)\log n}
\end{align*}
for $C_2>0$ depending only on $\p,D,\alpha_1,\alpha_2$, where in the second inequality we used Proposition \ref{prop:h_props} item \ref{enum:h_props_log}, and in the third we used \eqref{eqn:vp_Xi}. Note we can take $C_2$ to depend on $p$ rather than $\p$ by taking the minimum of the constants over the finitely many prime ideals of $\O_K$ lying above $p$.
\end{proof}

We exhibit how Theorems \ref{thm:2_sum} and \ref{thm:3_sum} lead to decidability for a large class of LRS.
\begin{definition}[MSTV class]
The Mignotte-Shorey-Tijdeman-Vereshchagin (MSTV) class consists of all $\overline \Q$-LRS that have at most 3 dominant roots with respect to some Archimedean absolute value, or at most 2 dominant roots with respect to some non-Archimedean absolute value.
\end{definition}
Note that this terminology was introduced in \cite{Lipton_2022} for $\mathbb Z$-LRS, our definition subsumes the one given there. 
\begin{theorem}\label{MSTV_Decidable}
The Skolem problem is decidable for all non-degenerate LRS in the MSTV class.
\end{theorem}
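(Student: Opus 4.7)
The plan is to make precise the outline sketched just before the theorem statement. Fix an LRS $\LRS{u}$ in the MSTV class, non-degenerate by the reduction in Section \ref{Linear Recurrence Sequences}, with exponential polynomial representation \eqref{Exp_sum}. By definition there is an effectively identifiable absolute value $|\cdot|_v \in M_K$ together with $r$ such that $\lambda_1, \dots , \lambda_r$ are exactly the roots dominant with respect to $|\cdot|_v$, where $r \leq 2$ if $v$ is non-Archimedean and $r \leq 3$ if $v$ is Archimedean. Decompose $u_n = S_n + T_n$ with $S_n = \sum_{i=1}^{r} P_i(n) \lambda_i^n$ and $T_n = \sum_{i=r+1}^s P_i(n) \lambda_i^n$. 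The strategy is to show that for some effectively computable $N$, $|S_n|_v > |T_n|_v$ for all $n \geq N$, so that $|u_n|_v > 0$ for all such $n$, and then check $u_n$ directly for $n < N$.

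For the upper bound on $T_n$, set $\rho = \max_{i>r} |\lambda_i|_v / |\lambda_1|_v$, which is strictly less than $1$ and effectively computable. In the Archimedean case the triangle inequality yields $|T_n|_v \leq A n^D \rho^n |\lambda_1|_v^n$ for effectively computable $A, D$, since each $|P_i(n)|_v$ grows polynomially. In the non-Archimedean case, the strong triangle inequality together with $|n|_v \leq 1$ gives the cleaner bound $|T_n|_v \leq A \rho^n |\lambda_1|_v^n$.

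The lower bound on $S_n$ is the crux of the argument. The subtlety is that Theorems \ref{thm:2_sum} and \ref{thm:3_sum} take algebraic constants as coefficients, whereas here the coefficients are polynomials $P_i(n)$ in $n$. I would resolve this by treating $P_i(n)$ as an algebraic number for each fixed $n$; standard height inequalities for sums and products give $h(P_i(n)) \leq h^* + (\deg P_i)\log n + O(1)$, where $h^*$ depends only on the coefficients of $P_i$. Setting $h = O(\log n)$, the hypothesis $n > C_1(h+1)$ holds eventually, and the theorems yield
\begin{align*}
|S_n|_v \geq |\lambda_1|_v^n \exp\bigl(-C_3 (\log n)^2\bigr),
\end{align*}
whenever $S_n$ is a genuine instance of the theorem, i.e.\ the $P_i(n)$ are not all simultaneously zero; the exceptional $n$ form the common zero set of $P_1, \dots, P_r$, which is finite. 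The hypotheses of Theorems \ref{thm:2_sum} and \ref{thm:3_sum} are satisfied: non-degeneracy of $\LRS{u}$ gives that no $\lambda_i/\lambda_j$ is a root of unity, and in the $r = 3$ case dominance with respect to the Archimedean $v$ forces $|\lambda_1|_v = |\lambda_2|_v = |\lambda_3|_v$.

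Comparing the two bounds, $\exp(-C_3 (\log n)^2)$ decays sub-exponentially while $\rho^n$ decays exponentially, so an effectively computable threshold $N$ exists beyond which $|S_n|_v > |T_n|_v$. In the Archimedean case this yields $|u_n|_v \geq |S_n|_v - |T_n|_v > 0$, and in the non-Archimedean case the strong triangle inequality gives $|u_n|_v = |S_n|_v > 0$. The algorithm then tests the finitely many indices $n < N$ (together with the finitely many exceptional common roots of the $P_i$) directly. I expect the main obstacle to be full effectivity: one must verify that $v$ and the list of dominant roots are computable from a standard presentation of $\LRS{u}$, and carefully extract explicit constants from Theorems \ref{thm:2_sum} and \ref{thm:3_sum} together with explicit bounds on $h(P_i(n))$ in order to compute $N$ uniformly in the input.
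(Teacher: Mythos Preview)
Your proposal is correct and follows essentially the same approach as the paper: the paper does not give a standalone proof of this theorem but simply summarises the preceding discussion, which is exactly the dominant/non-dominant decomposition $u_n = S_n + T_n$, a lower bound on $|S_n|_v$ via Theorems~\ref{thm:2_sum} and~\ref{thm:3_sum}, and the observation that this outgrows the non-dominant remainder. You have supplied the details the paper omits (handling the polynomial coefficients $P_i(n)$ via height bounds $h(P_i(n)) = O(\log n)$, and the effectivity check), but the structure is identical.
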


\begin{proof}
Let $\LRS{u}$ be a non-degenerate LRS in the MSTV class with distinct characteristic roots $\lambda_1, \dots , \lambda_s$. For some absolute value $|\cdot|_v$ and for some $1 \leq r \leq 3$ we have 
\begin{align*}
|\lambda_1|_v = \dots = |\lambda_r|_v > |\lambda_{r+1}|_v \geq \dots \geq |\lambda_s|_v \, .
\end{align*}
As noted in Section \ref{Linear Recurrence Sequences}, $\LRS{u}$ admits the exponential polynomial representation
\begin{align}\label{Exp_sum}
    u_n = \sum_{i=1}^s P_i(n) \lambda_i^n 
\end{align}
for polynomials $P_1, \dots , P_s$ with algebraic coefficients. When $|\cdot|_v$ is any non-trivial absolute value and $r = 2$, or when $|\cdot|_v$ is Archimedean and $r=3$, we apply Theorem \ref{thm:2_sum} or \ref{thm:3_sum} to $\sum_{i=1}^r P_i(n) \lambda_i^n$ to get computable constants $C_1,C_2 \geq 0$ such that for all $n > C_1(h+1)$
\begin{align*}
\left| \sum_{i=1}^r P_i(n) \lambda_i^n \right|_v \geq |\lambda_1|_v^n e^{-C_2 \left(\underset{i} \max \{h(P_i(n))\} + 1 \right) \log n} \, .
\end{align*}
Note also that for $r=1$ such an inequality is trivial. Moreover, by repeated application of items \ref{enum:h_props_mult},\ref{enum:h_props_add} of Proposition \ref{prop:h_props}, we have $\underset{i}{\max} \{h(P_i(n))\} \leq C_3 \log n$ for $n\geq 2$ and for some effective constant $C_3 >0$ depending on the coefficients of each $P_i$. Thus, we have for all $n > C_1(h+1)$
\begin{align} \label{eqn:final_ineq}
|u_n| \geq \left| \sum_{i=1}^r P_i(n) \lambda_i^n \right|_v - \left| \sum_{i=r+1}^s P_i(n) \lambda_i^n \right|_v \geq |\lambda_1|_v^n e^{-C_4 (\log n)^2} - C_5 n^{C_6}|\lambda_{r+1}|_v^n
\end{align}
for some effective $C_4,C_5,C_6 > 0$. Since $|\lambda_1|_v > |\lambda_{r+1}|_v$, the first term of the right-hand side of \eqref{eqn:final_ineq} grows faster than the second and one may therefore compute $C_7 >0$ such that $|u_n| > 0$ for all $n > C_7$. Decidability of the Skolem Problem immediately follows as then one only needs to check whether $u_n = 0$ for each $0 \leq n \leq C_7$.
\end{proof}

In \cite{Lipton_2022} it is written that the Skolem Problem is known to be decidable for \emph{all} $\Z$-LRS in the MSTV class. This is not true. This is because Theorems \ref{thm:2_sum} and \ref{thm:3_sum} only apply to \emph{non-degenerate} LRS in the MSTV class and, while every LRS can be written as the interleaving of non-degenerate subsequences, crucially the MSTV class is not closed under taking subsequences. Indeed, let $\LRS{v}$ be an arbitrary non-degenerate LRS, and let $\lambda$ be an algebraic number whose modulus is larger than any characteristic root of $\LRS{v}$. Defining 
\begin{align*}
    u_n = \lambda^n - (-\lambda)^n + v_n \, ,
\end{align*}
we see that $\LRS{u}$ has 2 dominant roots in modulus, yet $u_{2n} = v_{2n}$. Since $\LRS{v}$ was arbitrary, we see $\langle u_{2n} \rangle_{n=0}^\infty$ need not lie in the MSTV class.

Fortunately, it is easy to see that the main results of \cite{Lipton_2022} (decidability of low order reversible sequences, and order 5 sequences assuming the Skolem Conjecture) is not affected by this oversight.

\section{The Skolem Problem at order 4}
Our aim is to prove the following:
\begin{theorem} \label{MainThm}
	The Skolem problem is decidable for all algebraic LRS with at most $4$ distinct characteristic roots. In particular, decidability holds for all algebraic LRS of order $d \leq 4$.
\end{theorem}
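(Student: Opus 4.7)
The plan is to proceed by case analysis on the number $s$ of distinct characteristic roots of $\LRS{u}$. If $s \leq 3$, at most three roots can be dominant under any Archimedean place (trivially), so $\LRS{u}$ lies in the MSTV class and Theorem \ref{MSTV_Decidable} gives decidability. The same is true when $s = 4$ and $\LRS{u}$ already belongs to MSTV. Hence the whole argument reduces to the case $s = 4$ with $\LRS{u}$ outside MSTV, which I address as follows.

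In this remaining case, failing to be in MSTV forces two structural conditions: (a) at every Archimedean place $v$, $|\lambda_1|_v = |\lambda_2|_v = |\lambda_3|_v = |\lambda_4|_v$; and (b) at every non-Archimedean place at most one of the four characteristic roots is strictly non-dominant. By non-degeneracy no quotient $\lambda_i/\lambda_j$ is a root of unity, so by Kronecker's theorem it has positive height, and since (a) kills every Archimedean contribution to that height the positivity is inherited from some non-Archimedean place $v_0$ at which $|\lambda_i/\lambda_j|_{v_0} \neq 1$. Combined with (b), exactly one root is strictly non-dominant at $v_0$; after relabelling I take that root to be $\lambda_4$ and write $\rho_0 := |\lambda_1|_{v_0}$, $\rho_4 := |\lambda_4|_{v_0}$, so $\rho_0 > \rho_4$.

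Since the four characteristic roots are distinct, the exponential polynomial representation reduces to $u_n = b_1\lambda_1^n + b_2\lambda_2^n + b_3\lambda_3^n + b_4\lambda_4^n$ with constant coefficients, and a hypothetical zero gives
\[
b_1\lambda_1^n + b_2\lambda_2^n + b_3\lambda_3^n = -b_4\lambda_4^n,
\]
whose right-hand side has $v_0$-absolute value $|b_4|_{v_0}\rho_4^n$, exponentially smaller than $\rho_0^n$. The key technical step of the plan is a non-Archimedean analogue of Theorem \ref{thm:3_sum}: for $|\alpha_1|_v = |\alpha_2|_v = |\alpha_3|_v$ at a non-Archimedean place $v$ with no pairwise quotient a root of unity, the same lower bound
\[
|b_1\alpha_1^n + b_2\alpha_2^n + b_3\alpha_3^n|_v \geq |\alpha_1|_v^n e^{-C(h+1)\log n}
\]
continues to hold. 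I expect this to be the main obstacle, since Mignotte's Archimedean proof of Theorem \ref{thm:3_sum} uses complex-analytic geometry. I would derive the non-Archimedean version by imitating Mignotte's argument with Yu's theorem (Theorem \ref{Yu}) substituted for Matveev's theorem (Theorem \ref{Matveev}) at each appeal to a Baker-type bound, in the spirit of how the 2-term case was handled in the proof of Theorem \ref{thm:2_sum}.

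Once this non-Archimedean 3-term bound is in place, applying it at $v_0$ to the rearranged equation above gives $\rho_0^n e^{-C(h+1)\log n} \leq |b_4|_{v_0}\rho_4^n$, equivalently $n\log(\rho_0/\rho_4) \leq C(h+1)\log n + \log|b_4|_{v_0}$. Since the left-hand side grows linearly in $n$ while the right-hand side grows only logarithmically, $n$ must lie below an effectively computable constant, reducing the Skolem problem for $\LRS{u}$ to the verification of finitely many indices.
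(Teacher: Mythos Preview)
Your reduction to the case $s=4$, not in MSTV, is correct, and your use of heights to locate a non-Archimedean place $v_0$ with exactly one non-dominant root is fine. The gap is the ``key technical step'': a non-Archimedean analogue of Theorem~\ref{thm:3_sum} is not known, and it cannot be obtained by simply swapping Yu's bound for Matveev's in Mignotte's proof. Mignotte's three-term argument is not a black-box application of Baker; it exploits the Archimedean geometry of the unit circle (if three unit-modulus complex numbers have small sum, two of them must be nearly antipodal, which feeds back into a two-term Baker bound). In the ultrametric world this fails outright: three $\p$-adic units can sum to something of arbitrarily high valuation with no two of them $\p$-adically close, so there is no pairwise relation to extract. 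This is precisely why the MSTV class admits three dominant roots only at Archimedean places but just two at non-Archimedean ones.

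The paper sidesteps this entirely by showing that your remaining case is vacuous. From condition~(a) one gets $\lambda_i\overline{\lambda_i}=\lambda_j\overline{\lambda_j}$ for all $i,j$, hence $|\lambda_i|_\p|\overline{\lambda_i}|_\p$ is independent of $i$ at every prime $\p$. If at your place $v_0=\p$ only $\lambda_4$ is non-dominant, this forces $|\overline{\lambda_4}|_\p>|\overline{\lambda_i}|_\p$ for $i\neq 4$; but then at the conjugate place $|x|_v:=|\overline{x}|_\p$ the root $\lambda_4$ is the \emph{unique} dominant one, contradicting~(b). Hence in fact $|\lambda_i|_\p$ is constant in $i$ at every $\p$, so each $\lambda_i/\lambda_j$ is a unit with all Archimedean absolute values equal to $1$, and Kronecker's theorem makes it a root of unity --- contradicting non-degeneracy. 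Thus every non-degenerate order-$4$ LRS already lies in MSTV, and no new lower bound is needed.
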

For this we require the following basic result of Kronecker \cite{Kronecker}:
\begin{theorem}[Kronecker] \label{Kronecker}
	Let $\alpha$ be a non-zero algebraic integer. If all Galois conjugates of $\alpha$ lie in the unit disc $\{z \in \mathbb C : |z| \leq 1\}$ then $\alpha$ is a root of unity.
\end{theorem}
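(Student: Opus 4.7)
The plan is the classical orbit-counting argument going back to Kronecker himself. First, I would observe that for every $n \geq 1$ the power $\alpha^n$ is again a non-zero algebraic integer (since the algebraic integers form a ring), and that its Galois conjugates over $\Q$ form a subset of $\{\alpha_1^n, \ldots, \alpha_d^n\}$, where $\alpha = \alpha_1, \ldots, \alpha_d$ denote the Galois conjugates of $\alpha$ (indeed, any embedding of $\Q(\alpha^n)$ into $\overline \Q$ extends to an embedding of $\Q(\alpha)$, which sends $\alpha$ to some $\alpha_i$ and hence $\alpha^n$ to $\alpha_i^n$). In particular $\alpha^n$ has degree at most $d$ over $\Q$, and by hypothesis all of its conjugates still lie in the closed unit disc.

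Next, I would bound the minimal polynomial $m_n(X) \in \Z[X]$ of $\alpha^n$ uniformly in $n$. Its coefficients are, up to sign, the elementary symmetric polynomials in the roots of $m_n$, each of which has absolute value at most $1$. Since $\deg m_n \leq d$, the $k$-th coefficient is bounded in absolute value by $\binom{d}{k} \leq 2^d$. Being an integer of absolute value at most $2^d$, it takes only finitely many values, so there are only finitely many possibilities for $m_n$, and therefore only finitely many possibilities for $\alpha^n$ itself as $n$ ranges over $\N$.

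Consequently the sequence $\alpha, \alpha^2, \alpha^3, \ldots$ takes only finitely many values and must admit a collision: there exist integers $1 \leq m < n$ with $\alpha^m = \alpha^n$. Since $\alpha \neq 0$ we may divide to obtain $\alpha^{n-m} = 1$, so $\alpha$ is a root of unity, as required.

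The only delicate ingredient is the \emph{uniform} in $n$ bound on the coefficients of $m_n$; this is exactly where the hypothesis that every conjugate of $\alpha$ lies in the closed unit disc is used, since without it the elementary symmetric functions of $\alpha_1^n, \ldots, \alpha_d^n$ would grow with $n$ and the pigeonhole step would collapse. The remaining steps (closure of algebraic integers under products, description of the conjugates of $\alpha^n$, and finite pigeonhole) are routine.
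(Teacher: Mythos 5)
Your proof is correct: this is the classical pigeonhole argument for Kronecker's theorem, and each step is sound --- the conjugates of $\alpha^n$ are indeed among $\alpha_1^n,\dots,\alpha_d^n$, the coefficients of the minimal polynomial of $\alpha^n$ are elementary symmetric functions of numbers in the closed unit disc and hence integers bounded by $2^d$ uniformly in $n$, and the collision $\alpha^m=\alpha^n$ forces $\alpha^{n-m}=1$. The paper itself offers no proof, citing Kronecker's original article instead, so there is nothing to compare against; your argument is the standard one and would serve as a self-contained justification.
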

Theorem \ref{MainThm} follows immediately from the following lemma.
\begin{lemma} \label{MainLemma}
Given algebraic numbers $\lambda_1, \lambda_2, \lambda_3, \lambda_4$, let $K$ be the Galois closure of $\Q(\lambda_1, \lambda_2, \lambda_3, \lambda_4)$. If $\lambda_1, \lambda_2, \lambda_3, \lambda_4$ are all dominant with respect to every Archimedean absolute value on $K$, and for every non-Archimedean absolute value $|\cdot|_v$ at least 3 of the $\lambda_i$ are dominant with respect to $|\cdot|_v$, then for all $i \neq j$, we have $\lambda_i/\lambda_j$ is a root of unity. 
\end{lemma}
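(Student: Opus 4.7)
The plan is to apply Kronecker's theorem (Theorem~\ref{Kronecker}) to each ratio $\alpha_{ij} := \lambda_i/\lambda_j$. Since $K$ is Galois over $\Q$, every Galois conjugate of $\alpha_{ij}$ lies in $K$, and its complex absolute value under any embedding of $K$ equals $|\alpha_{ij}|_w$ for some Archimedean place $w$ on $K$. The Archimedean hypothesis forces $|\lambda_i|_w = |\lambda_j|_w$ for every Archimedean $w$, so every Galois conjugate of $\alpha_{ij}$ has complex absolute value exactly $1$. It therefore suffices to show that $\alpha_{ij}$ is an algebraic integer---equivalently, that $v_\p(\lambda_i) = v_\p(\lambda_j)$ for every prime $\p$ of $\O_K$.

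I would then multiply through by a common rational integer to assume each $\lambda_k \in \O_K$, and factor $(\lambda_k) = \mathfrak{M}\cdot\mathfrak{I}_k$ where $\mathfrak{M} = \prod_\p \p^{\min_l v_\p(\lambda_l)}$ and $\mathfrak{I}_k$ is an integral ideal. The non-Archimedean hypothesis translates precisely into the statement that $\mathfrak{I}_1,\ldots,\mathfrak{I}_4$ are pairwise coprime integral ideals; since each $(\lambda_k) = \mathfrak{M}\mathfrak{I}_k$ is principal, these four ideals share the common ideal class $[\mathfrak{M}]^{-1}$. A short calculation applying the product formula to $\alpha_{ij}$, together with the Archimedean hypothesis, shows that the four non-negative real numbers $h_k := \sum_\p d_\p\bigl(\log\max_l|\lambda_l|_\p - \log|\lambda_k|_\p\bigr)$ are all equal. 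Since $h_k = 0$ iff $\mathfrak{I}_k = (1)$, this yields a clean dichotomy: either every $\mathfrak{I}_k = (1)$, or every $\mathfrak{I}_k$ is non-trivial.

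In the first case each $(\lambda_k) = \mathfrak{M}$, so the $\lambda_k$ differ pairwise by units; each $\alpha_{ij}$ is then a unit in $\O_K^*$, and since all its Galois conjugates lie on the unit circle, Kronecker's theorem immediately yields that $\alpha_{ij}$ is a root of unity. The main obstacle is ruling out the second case. Here my approach is to pick a generator $\gamma_{ij}$ of the principal ideal $\mathfrak{I}_i\mathfrak{I}_j^{-1}$ and write $\alpha_{ij} = u_{ij}\gamma_{ij}$ with $u_{ij} \in \O_K^*$; the Archimedean hypothesis then constrains $\log|\gamma_{ij}|_\sigma$ to lie in the logarithmic unit lattice of $K$. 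Combining this lattice constraint with the pairwise coprimality of the $\mathfrak{I}_k$ and the transitive Galois action on primes of $K$ above each rational prime, I expect one can force two or more non-dominant $\lambda_l$'s at some single non-Archimedean place, contradicting the non-Archimedean hypothesis. Balancing this lattice arithmetic against the combinatorics of Galois orbits on primes of $K$ is where the real technical difficulty lies.
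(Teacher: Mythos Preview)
Your reduction to Kronecker's theorem is correct and matches the paper: the Archimedean hypothesis places every Galois conjugate of $\lambda_i/\lambda_j$ on the unit circle, so the whole problem is to show $v_\p(\lambda_i)=v_\p(\lambda_j)$ for every prime $\p$ of $\O_K$. Your product-formula dichotomy (either all $\mathfrak{I}_k=(1)$ or none) is also a valid observation. However, the proof is incomplete. In the second branch you only say you ``expect'' a contradiction can be forced via unit-lattice and Galois-orbit considerations, and you yourself label this step as ``where the real technical difficulty lies''. That is a genuine gap: nothing concrete is offered to rule out four pairwise coprime non-trivial ideals in a common class whose pairwise quotients admit generators with all Archimedean absolute values equal to $1$, and the vague appeal to transitivity of the Galois action on primes over a rational prime does not by itself produce two non-dominant $\lambda_l$ at a single place.

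The missing idea is a short complex-conjugation trick that replaces all of your ideal-class and lattice machinery. From the Archimedean hypothesis with $\sigma=\mathrm{id}$ one gets $\lambda_1\overline{\lambda_1}=\cdots=\lambda_4\overline{\lambda_4}$, hence $|\lambda_i|_\p\,|\overline{\lambda_i}|_\p$ is independent of $i$ at every prime $\p$. If at some $\p$ exactly three of the $\lambda_i$ are dominant, say $|\lambda_1|_\p=|\lambda_2|_\p=|\lambda_3|_\p>|\lambda_4|_\p$, this relation forces $|\overline{\lambda_4}|_\p>|\overline{\lambda_1}|_\p=|\overline{\lambda_2}|_\p=|\overline{\lambda_3}|_\p$. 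Then at the non-Archimedean place $|x|_w:=|\overline{x}|_\p$ only $\lambda_4$ is dominant, contradicting the hypothesis that at least three are dominant at every non-Archimedean place. Therefore all four $|\lambda_i|_\p$ coincide at every $\p$, each $\lambda_i/\lambda_j$ is a unit in $\O_K$, and Kronecker finishes immediately. This eliminates your ``second case'' in one stroke.
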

\begin{proof}
	Since all the $\lambda_i$ are dominant with respect to every Archimedean absolute value on $K$, we have
	\begin{align} \label{sigma_equality}
		|\sigma(\lambda_1)| = |\sigma(\lambda_2)| = |\sigma(\lambda_3)| = |\sigma(\lambda_4)|
	\end{align}
	for all $\sigma \in \text{Gal}(K/\Q)$ from which it follows (taking $\sigma$ to be the identity) that
	\begin{align*}
		\lambda_1 \overline{\lambda_1} = \lambda_2 \overline{\lambda_2} = \lambda_3 \overline{\lambda_3} = \lambda_4 \overline{\lambda_4}
	\end{align*}
	and so for every prime ideal $\p \subseteq \O_K$ we have
	\begin{align} \label{padic_equality}
		|\lambda_1|_\p |\overline{\lambda_1}|_\p = |\lambda_2|_\p |\overline{\lambda_2}|_\p = |\lambda_3|_\p |\overline{\lambda_3}|_\p = |\lambda_4|_\p |\overline{\lambda_4}|_\p.
	\end{align}
	For a fixed prime ideal $\p \subseteq \O_K$, by assumption we have (up to relabelling)
	\begin{align} \label{padic_inequality} 
		|\lambda_1|_\p = |\lambda_2|_\p = |\lambda_3|_\p \geq |\lambda_4|_\p.
	\end{align}
	Suppose that exactly 3 of the $\lambda_i$ were dominant with respect to $|\cdot|_\p$, so the inequality in \eqref{padic_inequality} was strict. Then together with \eqref{padic_equality} this forces
	\begin{align*}
		|\overline{\lambda_4}|_\p > |\overline{\lambda_1}|_\p = |\overline{\lambda_2}|_\p = |\overline{\lambda_3}|_\p.
	\end{align*}
	But the absolute value $|\cdot|_v$ defined by $|x|_v = |\overline{x}|_\p$ is non-Archimedean, and $\lambda_4$ is dominant with respect to $|\cdot|_v$, contradicting our assumption that at least 3 of the $\lambda_i$ are dominant with respect to every non-Archimedean absolute value. So in fact the inequality in \eqref{padic_inequality} must be an equality, i.e.\ for every prime ideal $\p \subseteq \O_K$ we have
	\begin{align*}
		|\lambda_1|_\p = |\lambda_2|_\p = |\lambda_3|_\p = |\lambda_4|_\p.
	\end{align*}
	Thus $v_\p(\lambda_i) = v_\p(\lambda_j)$ for all prime ideals $\p \subseteq \O_K$, so by Theorem \ref{thm:ideal_fac} we have equality of fractional ideals $\lambda_i \O_K = \lambda_j \O_K$. Therefore $\frac{\lambda_i}{\lambda_j} \O_K = \O_K$ and so $\frac{\lambda_i}{\lambda_j}$ is an algebraic integer (in fact, a unit of $\O_K$) for each $i \neq j$. Also, \eqref{sigma_equality} implies that $\left|\sigma\left(\lambda_i/\lambda_j\right)\right| = 1$ for all $\sigma \in \text{Gal}(K/\Q)$. Therefore, we may apply Kronecker's Theorem \ref{Kronecker} to conclude that $\lambda_i/\lambda_j$ is a root of unity.
\end{proof}
\begin{proof}[Proof of Theorem \ref{MainThm}]
	Let $\LRS{u}$ be an algebraic LRS with at most 4 distinct characteristic roots. As noted in Section \ref{Linear Recurrence Sequences}, one can effectively split $\LRS{u}$ into non-degenerate subsequences, and since each subsequence also has at most 4 distinct characteristic roots, we may assume $\LRS{u}$ is non-degenerate. By Lemma \ref{MainLemma} applied to the characteristic roots of $\LRS{u}$ and by non-degeneracy of $\LRS{u}$ we conclude that $\LRS{u}$ lies in the MSTV class and so has decidable Skolem problem by Theorem \ref{MSTV_Decidable}.
\end{proof} 
\begin{remark}
Explicitly, what we have shown is that every non-degenerate algebraic LRS with at most 4 distinct characteristic roots is in the MSTV class.
\end{remark}
\section*{Acknowledgements}
    The author would like to thank Prof.\ Péter Varjú for bringing this problem to his attention, and for his guidance in the writing of this paper. The author would also like to thank Prof.\ James Worrell for many useful comments. Finally, the author is grateful to the reviewers whose detailed comments have substantially improved the quality of the paper. The author was supported by EPSRC grant EP/X033813/1.

    \printbibliography

@Misc{Skolem_SML,
 Author = {Skolem, Th.},
 Title = {Ein {Verfahren} zur {Behandlung} gewisser exponentialer {Gleichungen} und diophantischer {Gleichungen}},
 Year = {1935},
 Language = {German},
 HowPublished = {8. {Skand}. {Mat}.-{Kongr}., 163-188 (1935).},
 Keywords = {11D61,11D41},
 zbMATH = {3017771},
 Zbl = {0011.39201}
}

@article{lech_note_1953,
	title = {A note on recurring series},
	volume = {2},
	issn = {0004-2080},
	doi = {10.1007/BF02590997},
	language = {en},
	number = {5},
	journal = {Arkiv för Matematik},
	author = {Lech, Christer},
	month = {8},
	year = {1953},
	pages = {417--421},
}

@Article{Mahler_SML,
 Author = {Mahler, K.},
 Title = {Eine arithmetische {Eigenschaft} der {Taylor}-{Koeffizienten} rationaler {Funktionen}.},
 FJournal = {Proceedings. Akadamie van Wetenschappen Amsterdam},
 Journal = {Proc. Akad. Wet. Amsterdam},
 ISSN = {0370-0348},
 Volume = {38},
 Pages = {50--60},
 Year = {1935},
 Language = {German},
 zbMATH = {2531698},
 JFM = {61.0176.02}
}

@article{Mignotte_distance,
	title = {The distance between terms of an algebraic recurrence sequence.},
	volume = {1984},
	issn = {0075-4102, 1435-5345},
	doi = {10.1515/crll.1984.349.63},
	number = {349},
	journal = {Journal für die reine und angewandte Mathematik (Crelles Journal)},
	author = {Tijdeman, R. and Mignotte, M. and Shorey, T.N.},
	month = may,
	year = {1984},
	pages = {63--76},
}

@article{Vereshchagin,
	title = {Occurrence of zero in a linear recursive sequence},
	volume = {38},
	copyright = {http://www.springer.com/tdm},
	issn = {0001-4346, 1573-8876},
	doi = {10.1007/BF01156238},
	language = {en},
	number = {2},
	journal = {Mathematical Notes of the Academy of Sciences of the USSR},
	author = {Vereshchagin, N. K.},
	year = {1985},
	pages = {609--615},
}

@inproceedings{Lipton_2022,
	address = {Haifa Israel},
	title = {On the {Skolem} {Problem} and the {Skolem} {Conjecture}},
	isbn = {9781450393515},
	doi = {10.1145/3531130.3533328},
	booktitle = {Proceedings of the 37th {Annual} {ACM}/{IEEE} {Symposium} on {Logic} in {Computer} {Science} (LICS 2022)},
	publisher = {ACM},
	author = {Lipton, Richard and Luca, Florian and Nieuwveld, Joris and Ouaknine, Joël and Purser, David and Worrell, James},
	year = {2022},
	pages = {1--9},
}

@InProceedings{Yuri_2023,
  author =	{Bilu, Yuri and Luca, Florian and Nieuwveld, Joris and Ouaknine, Jo\"{e}l and Purser, David and Worrell, James},
  title =	{{Skolem Meets Schanuel}},
  booktitle =	{47th International Symposium on Mathematical Foundations of Computer Science (MFCS 2022)},
  pages =	{20:1--20:15},
  series =	{Leibniz International Proceedings in Informatics (LIPIcs)},
  ISBN =	{978-3-95977-256-3},
  ISSN =	{1868-8969},
  year =	{2022},
  volume =	{241},
  publisher =	{Schloss Dagstuhl -- Leibniz-Zentrum f{\"u}r Informatik},
  address =	{Dagstuhl, Germany},
  doi =		{10.4230/LIPIcs.MFCS.2022.20},
}

@phdthesis{Chonev,
	title = {Reachability {Problems} for {Linear} {Dynamical} {Systems}},
	url = {https://www.cs.ox.ac.uk/people/ventsi.chonev/thesis.pdf},
	school = {University of Oxford},
	author = {Chonev, Ventsislav},
    pages = {33},
    year = {2015},
}

@article{Kronecker,
	title = {Zwei {Sätze} über {Gleichungen} mit ganzzahligen {Coefficienten}.},
	volume = {1857},
	issn = {0075-4102, 1435-5345},
	doi = {10.1515/crll.1857.53.173},
	number = {53},
	journal = {Journal für die reine und angewandte Mathematik (Crelles Journal)},
	author = {Kronecker, Leopold},
	month = jan,
	year = {1857},
	pages = {173--175},
}

@book{recurrence,
	address = {Providence, RI},
	series = {Mathematical surveys and monographs},
	title = {Recurrence sequences},
	isbn = {9780821833872},
	number = {v. 104},
	publisher = {American Mathematical Society},
	author = {Everest, Graham and Van der Poorten, Alf and Shparlinski, Igor and Ward, Thomas},
	year = {2003},
    pages = {5},
}

@article{ouaknine_linear_2015,
	title = {On linear recurrence sequences and loop termination},
	volume = {2},
	doi = {10.1145/2766189.2766191},
	number = {2},
	journal = {ACM SIGLOG News},
	author = {Ouaknine, Joël and Worrell, James},
	month = apr,
	year = {2015},
	pages = {4--13},
}

@article{almagor_deciding_2021,
	title = {Deciding $\omega$-regular properties on linear recurrence sequences},
	volume = {5},
	doi = {10.1145/3434329},
	number = {POPL},
	journal = {Proc. ACM Program. Lang.},
	author = {Almagor, Shaull and Karimov, Toghrul and Kelmendi, Edon and Ouaknine, Joël and Worrell, James},
	month = jan,
	year = {2021},
	pages = {48:1--48:24},
}

@book{berstel_noncommutative_2010,
	edition = {1},
	title = {Noncommutative {Rational} {Series} with {Applications}},
	copyright = {https://www.cambridge.org/core/terms},
	isbn = {9780511760860},
	publisher = {Cambridge University Press},
	author = {Berstel, Jean and Reutenauer, Christophe},
	month = {10},
	year = {2010},
    pages = {130},
	doi = {10.1017/CBO9780511760860},
}

@article{bell_mortality_2021,
	title = {On the mortality problem: {From} multiplicative matrix equations to linear recurrence sequences and beyond},
	volume = {281},
	issn = {08905401},
	shorttitle = {On the mortality problem},
	doi = {10.1016/j.ic.2021.104736},
	journal = {Information and Computation},
	author = {Bell, Paul C. and Potapov, Igor and Semukhin, Pavel},
	month = dec,
	year = {2021},
	pages = {104736},
}

@article{agrawal_approximate_2015,
	title = {Approximate {Verification} of the {Symbolic} {Dynamics} of {Markov} {Chains}},
	volume = {62},
	issn = {0004-5411},
	doi = {10.1145/2629417},
	number = {1},
	journal = {J. ACM},
	author = {Agrawal, Manindra and Akshay, S. and Genest, Blaise and Thiagarajan, P. S.},
	month = mar,
	year = {2015},
	pages = {2:1--2:34},
}

@article{barthe_universal_2020,
author       = {Gilles Barthe and
                Charlie Jacomme and
                Steve Kremer},
title        = {Universal Equivalence and Majority of Probabilistic Programs over Finite Fields},
journal      = {{ACM} Trans. Comput. Log.},
volume       = {23},
number       = {1},
pages        = {5:1--5:42},
year         = {2022},
url          = {https://doi.org/10.1145/3487063},
doi          = {10.1145/3487063},
timestamp    = {Sat, 06 Sep 2025 20:29:38 +0200},
biburl       = {https://dblp.org/rec/journals/tocl/BartheJK22.bib},
bibsource    = {dblp computer science bibliography, https://dblp.org}
}

@article{blondel_survey_2000,
	title = {A survey of computational complexity results in systems and control},
	volume = {36},
	copyright = {https://www.elsevier.com/tdm/userlicense/1.0/},
	issn = {00051098},
	doi = {10.1016/S0005-1098(00)00050-9},
%	language = {en},
	number = {9},
	journal = {Automatica},
	author = {Blondel, Vincent D. and Tsitsiklis, John N.},
	month = sep,
	year = {2000},
	pages = {1249--1274},
}

@inproceedings{Van_der_Poorten_1977,
 Author = {van der Poorten, A. J.},
 Title = {Linear forms in logarithms in the {{\(p\)}}-adic case},
 Year = {1977},
 booktitle = {Transcendence theory, Advances and Applications : Proceedings of a conference held in Cambridge in 1976, 29-57.},
 zbMATH = {3572282},
 Zbl = {0367.10034}
}

@article{Baker_Sharpening_II_1973,
author = {Baker, A.},
journal = {Acta Arithmetica},
number = {1},
pages = {33-36},
title = {A sharpening of the bounds for linear forms in logarithms II},
volume = {24},
year = {1973},
doi = {10.4064/aa-24-1-33-36}
}

@book{Waldschmidt_book,
	address = {Berlin, Heidelberg},
	series = {Grundlehren der mathematischen {Wissenschaften}},
	title = {Diophantine {Approximation} on {Linear} {Algebraic} {Groups}},
	volume = {326},
	copyright = {http://www.springer.com/tdm},
	isbn = {9783642086083},
	publisher = {Springer Berlin Heidelberg},
	author = {Waldschmidt, Michel},
	year = {2000},
	doi = {10.1007/978-3-662-11569-5},
}

@article{matveev_explicit_2000,
	title = {An explicit lower bound for a homogeneous rational linear form in the logarithms of algebraic numbers. {II}},
	volume = {64},
	issn = {1064-5632, 1468-4810},
	url = {https://www.mathnet.ru/eng/im314},
	doi = {10.1070/IM2000v064n06ABEH000314},
	number = {6},
	journal = {Izvestiya: Mathematics},
	author = {Matveev, E M},
	month = dec,
	year = {2000},
	pages = {1217--1269},
}

@article{yu_p-adic_1999,
	title = {{p}-adic logarithmic forms and group varieties {II}},
	volume = {89},
	issn = {0065-1036, 1730-6264},
	url = {http://www.impan.pl/get/doi/10.4064/aa-89-4-337-378},
	doi = {10.4064/aa-89-4-337-378},
	number = {4},
	journal = {Acta Arithmetica},
	author = {Yu, Kunrui},
	year = {1999},
	pages = {337--378},
}

@article{PaulVoutier1996,
author = {Paul Voutier},
journal = {Acta Arithmetica},
keywords = {algebraic number; absolute logarithmic height; Mahler measure},
language = {eng},
number = {1},
pages = {81-95},
title = {An effective lower bound for the height of algebraic numbers},
volume = {74},
year = {1996},
doi = {10.4064/aa-74-1-81-95}
}

@book{neukirch_algebraic_1999,
	address = {Berlin, Heidelberg},
	series = {Grundlehren der mathematischen {Wissenschaften}},
	title = {Algebraic {Number} {Theory}},
	volume = {322},
	copyright = {http://www.springer.com/tdm},
	isbn = {9783642084737},
	publisher = {Springer Berlin Heidelberg},
	author = {Neukirch, Jürgen},
	year = {1999},
	doi = {10.1007/978-3-662-03983-0},
}

@misc{bilu_skolem_2025,
    title={Skolem Problem for {Linear} {Recurrence} {Sequences} with 4 Dominant Roots (after {Mignotte},  {Shorey}, {Tijdeman}, {Vereshchagin} and {Bacik})},       
    author={Yuri Bilu},
    year={2025},
    eprint={2501.16290},
    archivePrefix={arXiv},
    primaryClass={math.NT},
    url={https://arxiv.org/abs/2501.16290},  
}
\end{document}